\title{A simple proof of the representation theorem \\
for betweenness preferences}
\author{Yutaro Akita}
\address{Dept.~of Economics, Penn State University}
\email{ytrakita@gmail.com}
\date{June 26, 2024}
\thanks{I am grateful to Kalyan Chatterjee
for helpful suggestions and discussions.}
\keywords{Betweenness; non-expected utility; utility representation;
separation theorem}
\begin{document}

\begin{abstract}
  This paper presents a simple proof
  of \citet{Dekel1986}'s representation theorem for betweenness preferences.
  The proof is based on the separation theorem.
\end{abstract}

\maketitle

\section{Introduction}

In the theory of non-expected utility under risk,
\citet{Dekel1986} characterizes the class of preferences
satisfying the betweenness axiom.
He shows that betweenness preferences
have implicit expected utility representations.
\citet{Payro2023} generalizes the representation theorem
to preferences on a compact convex subset of a vector space and
applies it to study temptation and self-control problems.

Their proofs, however, are long and involved.
This paper presents a shorter and more intuitive proof.
\citet{Conlon1995} also provides an alternative proof,
but I use a different technique.
My proof is based on the separation theorem:
the upper and lower contour sets of the preference are separated
by a hyperplane, which identifies the local utility index.
To deal with the case where the domain has infinite dimensions,
I use a technique similar to that of \citet{ChatterjeeKrishna2008}:
I first apply the separation theorem to construct the local utility index
on a finite-dimensional subset and then extend it to the whole domain.

\section{Model}

Let $X$ be a nonempty compact convex subset
of a Hausdorff topological vector space.
I take as primitive a binary relation $\wpr$ on $X$.
This formulation is general enough to cover
preferences over lotteries on a compact metrizable space
\citep[studied in][]{Dekel1986}, and
preferences over compact convex menus of lotteries \citep{DLR2001,Payro2023}.
As usual,
denote by $\ipr$ and $\spr$ the symmetric and asymmetric parts of $\wpr$,
respectively.

The following axioms are standard.

\begin{axiom}[Rationality]\label{axm:rat}
  The relation $\wpr$ is complete and transitive.
\end{axiom}

\begin{axiom}[Nondegeneracy]\label{axm:nond}
  There exists $(x, y) \in X^2$ such that $x \spr y$.
\end{axiom}

\begin{axiom}[Continuity]\label{axm:cont}
  The upper and lower contour sets of $\wpr$ are closed everywhere.
\end{axiom}

The key axiom is the following,
which is studied by \citet{Dekel1986} and \citet{Chew1989}.

\begin{axiom}[Betweenness]\label{axm:bet}
  For each $(\lambda, (x, y)) \in (0, 1) \times X^2$,
  if $x \spr y$, then $x \spr \lambda x + (1 - \lambda)y \spr y$.
\end{axiom}

Together with \cref{axm:rat,axm:cont},
the betweenness axiom \ref{axm:bet} implies neutrality toward mixing.

\begin{lemma}\label{lem:bet}
  If $\wpr$ satisfies \cref{axm:rat,axm:cont,axm:bet},
  then $x \ipr y$ implies $x \ipr \lambda x + (1 - \lambda)y \ipr y$
  for each $(\lambda, (x, y)) \in (0, 1) \times X^2$.
\end{lemma}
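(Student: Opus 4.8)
The plan is to fix $(\lambda, (x, y)) \in (0,1) \times X^2$ with $x \ipr y$, parametrize the segment by $f(t) = tx + (1 - t)y$ for $t \in [0, 1]$, and prove that every point on it is indifferent to $x$; the desired conclusion $x \ipr \lambda x + (1 - \lambda)y \ipr y$ is then the special case $t = \lambda$, using symmetry of $\ipr$ together with $x \ipr y$. Since scalar multiplication and addition are continuous in a topological vector space, $f$ is a continuous map from $[0,1]$ into $X$, and by convexity its image lies in $X$.

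Next I would partition $[0, 1]$ according to how $f(t)$ compares with $x$: let $U = \{t : f(t) \spr x\}$, $L = \{t : x \spr f(t)\}$, and $I = \{t : f(t) \ipr x\}$. Completeness (\cref{axm:rat}) makes these three sets a partition of $[0,1]$, with $0, 1 \in I$ since $f(0) = y \ipr x$ and $f(1) = x$. By \cref{axm:cont} the weak upper and lower contour sets are closed, so their complements, the strict contour sets, are open by completeness; pulling back along the continuous map $f$ shows that $U$ and $L$ are open in $[0,1]$, and since neither contains $0$ or $1$, each is an open subset of the real line.

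The heart of the argument is to show $U = \emptyset$, the case $L = \emptyset$ being symmetric. Suppose $t_0 \in U$, so that $f(t_0) \spr x$ and hence, by transitivity with $x \ipr y$, also $f(t_0) \spr y$. I would then apply \cref{axm:bet} twice: on the subsegment joining $f(t_0)$ to $x = f(1)$ it yields $f(t_0) \spr f(t)$ for every $t \in (t_0, 1)$, and on the subsegment joining $f(t_0)$ to $y = f(0)$ it yields $f(t_0) \spr f(t)$ for every $t \in (0, t_0)$. Hence $f(t_0) \spr f(t)$ for all $t \ne t_0$, so $t_0$ is a strict maximizer of $\wpr$ along the segment. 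Asymmetry of $\spr$ forces such a maximizer to be unique, so $U$ contains at most one point. But a nonempty open subset of the real line cannot be a singleton, a contradiction; therefore $U = \emptyset$.

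The symmetric argument shows that any $t \in L$ would make $f(t)$ the unique strict minimizer along the segment while $L$ is open, giving $L = \emptyset$ as well. Thus $I = [0,1]$, so every $f(t)$, and in particular $f(\lambda)$, is indifferent to both $x$ and $y$, which is the claim. The step I expect to be the main obstacle is the propagation in the third paragraph: one must notice that a strict preference at an \emph{interior} point of the segment forces that point to be a global strict extremum via betweenness, which is precisely what clashes with the openness supplied by continuity.
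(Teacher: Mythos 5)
Your proof is correct, but it follows a genuinely different route from the paper's. The paper argues by perturbation and limits: after a case split it picks an auxiliary point $z$ with $y \spr z$ (or, failing that, $z \spr x$), forms $w_n = (1 - n^{-1})y + n^{-1}z \to y$, notes that $x \ipr y \spr w_n$ gives $x \spr w_n$, applies \cref{axm:bet} to obtain $x \spr \lambda x + (1 - \lambda)w_n \spr w_n$ for every $n$, and then passes to the limit using the closed contour sets of \cref{axm:cont}, concluding by transitivity. You instead stay entirely on the segment $f(t) = tx + (1 - t)y$ and replace the limit argument by a connectedness-style one: \cref{axm:bet} makes any $t_0$ with $f(t_0) \spr x$ (resp.\ $x \spr f(t_0)$) the \emph{unique} strict maximizer (resp.\ minimizer) along the segment, while \cref{axm:rat,axm:cont} make the set of such parameters open in $\SR$, and a nonempty open subset of $\SR$ is never a singleton. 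As for what each approach buys: the paper's proof is a few lines shorter, but it needs the auxiliary point and the attendant case analysis (including the degenerate case, passed over in silence, in which no $z$ with $y \spr z$ and no $z$ with $z \spr x$ exists, so that $\wpr$ is total indifference and the claim is trivial), and its limit step deserves some care, since in $\lambda x + (1 - \lambda)w_n \spr w_n$ both sides move with $n$ (this is repairable, e.g.\ by noting that $\seq{w_n}$ is $\spr$-increasing by betweenness and applying closedness of contour sets twice). Your argument is case-free, needs no auxiliary point and no limits at all, only the topology of $[0,1]$, and it invokes exactly the same three axioms, so nothing is lost in generality.
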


\begin{proof}
  Assume \cref{axm:cont,axm:bet}.
  Choose any $(\lambda, (x, y)) \in (0, 1) \times X^2$ with $x \ipr y$.
  Suppose that we can take $z \in X$ be such that $y \spr z$.
  Otherwise, we can use the similar argument
  by taking $z \in X$ with $z \spr x$.
  Define the sequence $\seq{w_n}$ by $w_n = (1 - n^{-1})y + n^{-1}z$,
  which converges to $y$.
  For each $n \in \SN$,
  since $x \ipr y \spr w_n$ by \cref{axm:bet},
  we have $x \spr \lambda x + (1 - \lambda)w_n \spr w_n$
  again by \cref{axm:bet}.
  Thus, from \cref{axm:rat,axm:cont},
  $x \wpr \lambda x + (1 - \lambda)y \wpr y$;
  that is, $x \ipr \lambda x + (1 - \lambda)y$.
\end{proof}

A real-valued function $f$ on $X$
is \emph{mixture linear}
if $f(\lambda x + (1 - \lambda)y) = \lambda f(x) + (1 - \lambda)f(y)$
for each $(\lambda, (x, y)) \in [0, 1] \times X^2$;
a real-valued function $U$ on $X$ \emph{represents $\wpr$}
if $x \wpr y$ is equivalent to $U(x) \ge U(y)$.
An \emph{implicit mixture linear representation of $\wpr$}
is a real-valued function $u$ on $X \times [0, 1]$ such that
\begin{enumerate}
  \item
    $u(\cdot, t)$ is continuous and mixture linear for each $t \in (0, 1)$;
  \item
    $u(x, \cdot)$ is continuous on $(0, 1)$ for each $x \in X$;
  \item
    there exists $(x^*, x_*) \in X^2$
    such that $u(x^*, t) = 1$ and $u(x_*, t) = 0$ for each $t \in [0, 1]$;
  \item
    for each $x \in X$,
    there exists unique $t \in [0, 1]$ such that $t = u(x, t)$;
  \item
    the continuous real-valued function $U$ on $X$
    of the form $U(x) = u(x, U(x))$ represents $\wpr$.
\end{enumerate}

In the rest of the paper, I prove the following theorem.

\begin{theorem}[\citealp{Dekel1986,Payro2023}]\label{prop:bet}
  The relation $\wpr$ satisfies \cref{axm:rat,axm:nond,axm:cont,axm:bet}
  if and only if it admits an implicit mixture linear representation.
\end{theorem}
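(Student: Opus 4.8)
The plan is to treat the two directions separately, putting almost all the work into constructing a representation from the axioms; verifying the axioms from a given representation is the easy direction. Indeed, \cref{axm:rat,axm:cont} hold because $U$ is a continuous function representing $\wpr$ by property~(5), and \cref{axm:nond} holds because properties~(3)--(4) force $U(x^*)=1>0=U(x_*)$, so $x^*\spr x_*$. For \cref{axm:bet}, take $x\spr y$ and $\lambda\in(0,1)$, set $m=\lambda x+(1-\lambda)y$ and $s=U(m)$; mixture linearity of $u(\cdot,s)$ and the fixed-point identity give $s=\lambda u(x,s)+(1-\lambda)u(y,s)$, and combining this with the sign behaviour of $t\mapsto u(z,t)-t$ around its unique zero $U(z)$ places $U(m)$ strictly between $U(y)$ and $U(x)$. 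This is routine.

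For the converse I first extract a utility index. Compactness together with \cref{axm:rat,axm:cont} yields a best element $x^*$ and a worst element $x_*$, and \cref{axm:nond} gives $x^*\spr x_*$; write $c_t=t x^*+(1-t)x_*$. Since $x^*\wpr x\wpr x_*$ for every $x$, \cref{axm:bet,axm:cont} make $\alpha\mapsto\alpha x^*+(1-\alpha)x_*$ a continuous, preference-increasing path from $x_*$ to $x^*$, so there is a unique $\alpha$ with $x\ipr\alpha x^*+(1-\alpha)x_*$; set $U(x)=\alpha$. Standard arguments then show that $U$ is continuous and represents $\wpr$, with $U(x^*)=1$, $U(x_*)=0$, and $U(c_t)=t$. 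Moreover, \cref{lem:bet,axm:bet} imply that both contour sets are convex, so for each $t\in(0,1)$ the strict upper set $A_t=\{z:z\spr c_t\}$ and strict lower set $B_t=\{z:c_t\spr z\}$ are disjoint nonempty convex sets whose common boundary is the convex indifference class $I_t$.

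The core step is to produce, for each $t$, a continuous mixture linear $u(\cdot,t)$ with $u(x^*,t)=1$, $u(x_*,t)=0$, and the sign property $u(z,t)\gtrless t\iff U(z)\gtrless t$. In finite dimensions this is immediate: a separating hyperplane for $A_t$ and $B_t$ can be taken through $I_t$, and requiring the value $t$ on this hyperplane together with the two normalizations at $x^*$ and $x_*$ determines a unique affine functional. The main obstacle is that $X$ may be infinite dimensional with $A_t$ and $B_t$ having empty interior, so the separation theorem does not apply directly. Following \citet{ChatterjeeKrishna2008}, I instead construct the index on the finite-dimensional polytopes $\mathrm{co}(\{x,x^*,x_*\}\cup F)$, show that the values obtained on overlapping polytopes coincide, and conclude that they glue into a single mixture linear functional on all of $X$ carrying the required sign property, continuity being inherited from the sections. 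Establishing this consistency --- that the separating functional is pinned down by $I_t$ and the normalization and is stable across finite-dimensional sections, and that $u(\cdot,t)$ also varies continuously in $t$ --- is the delicate part of the argument.

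Finally I assemble $u$ and verify properties~(1)--(5). Properties~(1) and~(3) hold by construction, and~(2) follows from continuity of $U$ and of the level-$t$ construction in $t$. For~(4), fix $x$: the sign property gives $u(x,t)-t>0$ for $t<U(x)$ and $u(x,t)-t<0$ for $t>U(x)$, so $t\mapsto u(x,t)-t$ has the unique zero $t=U(x)$, whence $u(x,U(x))=U(x)$ and the fixed point is unique. Property~(5) is then immediate since $U$ represents $\wpr$.
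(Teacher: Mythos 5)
You have the right architecture, and it is in fact the same as the paper's: calibrate $U$ along the segment from $x_*$ to $x^*$, separate the convex contour sets inside finite-dimensional polytopes containing $\{x^*, x_*\}$, and glue the resulting normalized affine functionals into $u(\cdot, t)$. But there is a genuine gap exactly where you write that establishing consistency ``is the delicate part of the argument'': you never prove that the separating functionals on overlapping polytopes agree, nor that the glued function is continuous in $(x,t)$. Asserting that the functional is ``pinned down by $I_t$ and the normalization'' does not suffice as stated: within a given polytope $P$, the indifference class $I_t \cap P$ need not affinely span a hyperplane of the affine hull of $P$, so a separating hyperplane through $I_t \cap P$ satisfying the normalization at $x^*, x_*$ is not obviously unique, and nothing in your argument rules out two polytopes assigning different values to the same point $x$. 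Since everything else in the sufficiency direction is routine, this deferred step is the heart of the theorem, not a detail.

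The paper closes this gap with one additional calibration device that your proposal is missing: the function $\mu(x,t) \in (0,1]$ defined by $m_t \ipr \mu(x,t)x + (1 - \mu(x,t))\xi_t(x)$, where $\xi_t(x) \in \{x^*, x_*\}$ is the reference point on the opposite side of $m_t$ from $x$; its existence, uniqueness, and continuity follow from \cref{axm:rat,axm:cont,axm:bet}. The point $\mu(x,t)x + (1-\mu(x,t))\xi_t(x)$ lies in \emph{every} polytope $P \in \cP$ containing $x$ and is indifferent to $m_t$, so every normalized separating functional must satisfy $t = \mu(x,t)v_t^P(x) + (1-\mu(x,t))v_t^P(\xi_t(x))$, which solves to the explicit formula \eqref{eq:v_expl}. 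That single formula does all the work you deferred: it is manifestly independent of $P$ (consistency across sections), and, because $\mu$ is continuous, it yields joint continuity of $u$ on $X \times (0,1)$, i.e., the continuity claims in properties (1) and (2). Without $\mu$ or an equivalent device, your ``glue and inherit continuity'' step is an assertion rather than a proof. (A smaller remark in the same spirit: in your necessity sketch, the ``sign behaviour'' of $t \mapsto u(z,t) - t$ around $U(z)$ is not a consequence of property (4) at $z$ alone --- a unique zero is compatible with the wrong signs --- and needs its own short argument, e.g.\ applying property (4) to mixtures of $z$ with $x^*$ and $x_*$; the paper dismisses necessity as routine, so this is a matter of degree, but be aware the claim is not free.)
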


\section{Proof of the theorem}

The necessity of the axioms is routine.
We only show the sufficiency.
Assume \cref{axm:rat,axm:nond,axm:cont,axm:bet}.

By \cref{axm:rat,axm:cont},
there exists $(x^*, x_*) \in X^2$
such that $x^* \wpr x \wpr x_*$ for each $x \in X$.
By \cref{axm:nond}, $x^* \spr x_*$.
For each $t \in [0, 1]$, let $m_t = t x^* + (1 - t)x_*$.
For each $(x, t) \in X \times (0, 1)$,
let $\xi_t(x) = x_*$ if $x \wpr m_t$ and let $\xi_t(x) = x^*$ otherwise.
Then, by \cref{axm:rat,axm:cont,axm:bet},
there exist unique functions $U \colon X \to [0, 1]$ and
$\mu \colon X \times (0, 1) \to (0, 1]$
such that $x \ipr m_{U(x)}$ for each $x \in X$ and
$m_t \ipr \mu(x, t) x + (1 - \mu(x, t))\xi_t(x)$,
which are continuous by \cref{axm:cont}.
By \cref{axm:bet}, $U$ represents $\wpr$.

Let $\cP$ be the collection
of all polytopic subsets of $X$ that include $\{x^*, x_*\}$.
For each $(t, P) \in (0, 1) \times \cP$,
since by \cref{axm:cont,axm:bet} and \cref{lem:bet},
$\set{x \in P \mvert x \wpr m_t}$ and
$\set{x \in P \mvert m_t \spr x}$ are nonempty convex sets
having disjoint relative interiors,
it follows
from the separation theorem \citep[Theorem 11.3]{Rockafellar1970} that
there exists a nonconstant affine functional $v_t^P$ on the affine hull of $P$
such that for each $x \in P$,
\begin{equation}
  x \wpr m_t \iff v_t^P(x) \ge v_t^P(m_t), \qquad
  x \spr m_t \iff v_t^P(x) > v_t^P(m_t).
\end{equation}
Here, passing to a normalization,
we may assume $v_t^P(x^*) = 1$ and $v_t^P(x_*) = 0$.
Then, for each $(x, t) \in X \times (0, 1)$,
the value $v_t^P(x)$ is independent of the choice of $P$ as long as $x \in P$:
for each $P \in \cP$ with $x \in P$,
we have $t = v_t^P(m_t)
= \mu(x, t) v_t^P(x) + (1 - \mu(x, t))v_t^P(\xi_t(x))$, so
\begin{equation}\label{eq:v_expl}
  v_t^P(x)
  = \frac{t - (1 - \mu(x, t))v_t^P(\xi_t(x))}{\mu(x, t)}
  =
  \begin{dcases*}
    \frac{t}{\mu(x, t)}         & if $U(x) \ge t$, \\
    1 - \frac{1 - t}{\mu(x, t)} & otherwise.
  \end{dcases*}
\end{equation}

For each $x \in X$,
let $C(x)$ be the convex hull of $\{x^*, x_*, x\}$.
Define $u \colon X \times [0, 1] \to \SR$ by
\begin{equation}
  u(x, t) =
  \begin{dcases*}
    \vone_{X \setminus I(x_*)}(x) & if $t = 0$, \\
    v_t^{C(x)}(x)                 & if $t \in (0, 1)$, \\
    \vone_{I(x^*)}(x)             & if $t = 1$,
  \end{dcases*}
\end{equation}
where $I(z) = \set{x \in X \mvert x \ipr z}$ for each $z \in \{x^*, x_*\}$ and
$\vone_A$ is the indicator function of $A$ on $X$.
We show that $u$ is an implicit mixture linear representation of $\wpr$.
For each $t \in (0, 1)$,
the mixture linearity of $u(\cdot, t)$ is inherited from $v^P_t$'s.
From \eqref{eq:v_expl} and the continuity of $\mu$,
it follows that $u$ is continuous on $X \times (0, 1)$.
By construction, $u(x^*, t) = 1$ and $u(x_*, t) = 0$ for each $t \in [0, 1]$.
Finally,
$1 = u(x, 1)$ if and only if $x \ipr x^*$;
$0 = u(x, 0)$ if and only if $x \ipr x_*$;
for each $t \in (0, 1)$, we have $t = u(x, t)$ if and only if
$t = v_t^{C(x)}(x)$,
which is equivalent to $t = U(x)$.
\qed

\begin{remark}
  If $X$ is finite-dimensional,
  it is not necessary to restrict the upper and lower contour sets
  to a polytopic subset of $X$:
  we can just use the separation theorem to find a separating functional.
  This approach fails, however, if $X$ has infinite dimensions.
  For example,
  suppose $X$ is the set of all Borel probability measures on $[0, 1]$,
  endowed with the weak$*$ topology.
  Then, $X$ has empty relative interior,
  which makes the separation theorem inapplicable.
  By focusing on a polytopic (i.e., finite-dimensional) subset of $X$,
  I overcame this difficulty.
\end{remark}

\bibliographystyle{refs}
\bibliography{refs}

\end{document}